\newcommand{\hide}[1]{}
\newcommand{\ABox}{
\raisebox{3pt}{\framebox[6pt]{\rule{6pt}{0pt}}}
}
\newenvironment{proof}{{\bf Proof:}}{\hfill\ABox}
\newtheorem{theorem}{{\bf Theorem}}
\newtheorem{corollary}{Corollary}
\newtheorem{lemma}{Lemma}
\newtheorem{proposition}{Proposition}
\newcommand{\lemlab}[1]{\label{lemma:#1}}
\newcommand{\thmlab}[1]{\label{thm:#1}}
\newcommand{\corlab}[1]{\label{cor:#1}}
\newcommand{\figlab}[1]{\label{fig:#1}}
\newcommand{\seclab}[1]{\label{sec:#1}}
\newcommand{\lemref}[1]{\ref{lemma:#1}}
\newcommand{\thmref}[1]{\ref{thm:#1}}
\newcommand{\corref}[1]{\ref{cor:#1}}
\newcommand{\secref}[1]{\ref{sec:#1}}
\newcommand{\figref}[1]{\ref{fig:#1}}
 \gdef\xxxmark{%
   \expandafter\ifx\csname @mpargs\endcsname\relax 
     \expandafter\ifx\csname @captype\endcsname\relax 
       \marginpar{xxx}
     \else
       xxx 
     \fi
   \else
     xxx 
   \fi}
 \gdef\xxx{\@ifnextchar[\xxx@lab\xxx@nolab}
 \long\gdef\xxx@lab[#1]#2{{\bf [\xxxmark #2 ---{\sc #1}]}}
 \long\gdef\xxx@nolab#1{{\bf [\xxxmark #1]}}
 \gdef\turnoffxxx{\long\gdef\xxx@lab[##1]##2{}\long\gdef\xxx@nolab##1{}}%
\def\o{{\omega}}
\def\a{{\alpha}}
\def\b{{\beta}}
\def\R{{\mathbb{R}}}
\def\Q{{\mathbb{Q}}}
\title{%
Common Edge-Unzippings for Tetrahedra
} 
\author{%
Joseph O'Rourke%
    \thanks{Department of Computer Science, Smith College, Northampton, MA
      01063, USA.
      \protect\url{orourke@cs.smith.edu}.}
}
\begin{document}
\maketitle

\begin{abstract}
It is shown that there are examples of
distinct polyhedra, each with a Hamiltonian path of edges, which 
when cut, unfolds the surfaces to a common net.
In particular, it is established for infinite classes of triples of tetrahedra.
\end{abstract}

\section{Introduction}
\seclab{Introduction}
The limited focus of this note is to establish that there are
an infinite collection of
``edge-unfolding zipper pairs'' of convex polyhedra.
A  \emph{net} for a convex polyhedron $P$ is an unfolding
of its surface to a planar simple
(nonoverlapping) polygon, obtained by cutting
a spanning tree of its edges (i.e., of its 1-skeleton);
see~\cite[Sec.~22.1]{do-gfalop-07}.
Shephard explored in the 1970's the special case where the spanning tree is
a Hamiltonian path of edges on $P$~\cite{s-cpcn-75}.
Such \emph{Hamiltonian unfoldings} were futher studied
in~\cite{ddlo-efupp-02} (see~\cite[Fig.~25.59 ]{do-gfalop-07}),
and most recently in~\cite{lddss-zupc-10}, 
where the natural term \emph{zipper unfolding} was introduced.
Define two polyhedra to be an \emph{edge-unfolding zipper pair}
if they have a zipper unfolding to a common net.
Here we emphasize edge-unfolding, as opposed to an arbitrary
zipper path that may cut through the interior of faces,
which are easier to identify. 
Thus we are considering a special case of more general 
\emph{net pairs}:
pairs of polyhedra that may be cut open to a common net.

In general there is little understanding of which polyhedra form net
pairs under any definition.
See, for example, Open Problem~25.6 in~\cite{do-gfalop-07},
and~\cite{o-fzupp-10} for an exploration of Platonic solids.
Here we establish that there are infinite classes of convex
polyhedra that form edge-unfolding zipper pairs.
Thus one of these polyhedra can be cut open along an edge zipper path,
and rezipped to form a different polyhedron with the same property.
In particular, we prove this theorem:

\begin{theorem}
Every equilateral convex hexagon, with 
each angle in the range $(\pi/3,\pi)$,
and each pair of angles
linearly independent over $\Q$,
is the common edge-unzipping of three incongruent tetrahedra.
\thmlab{HexTheorem}
\end{theorem}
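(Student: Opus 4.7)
My plan is to realize $H$ as the common unzipping of three tetrahedra by performing three natural zipper identifications of its boundary (one for each pair of opposite vertices), invoking Alexandrov's uniqueness theorem to fold each into a convex tetrahedron, and then using the pairwise $\Q$-linear independence of the angles to show the three tetrahedra are pairwise incongruent. Label the hexagon vertices $v_1,\ldots,v_6$ cyclically with interior angles $\theta_1,\ldots,\theta_6$ and unit sides $e_i = v_i v_{i+1}$. For the zipper about axis $v_1 v_4$, pair the boundary as $(e_1,e_6),(e_2,e_5),(e_3,e_4)$; the axes $v_2 v_5$ and $v_3 v_6$ give the cyclic shifts. Each identification produces a topological sphere with four cone points: for axis $v_1 v_4$ the cone angles are $\theta_1, \theta_4, \theta_2+\theta_6, \theta_3+\theta_5$, each strictly below $2\pi$ because every $\theta_i\in(\pi/3,\pi)$. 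Alexandrov's theorem then yields a unique convex realization, which, having four cone points, must be either a tetrahedron or a doubly-covered convex polygon.

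To eliminate the doubly-covered case, I would argue that a flat realization forces the glued metric to carry a reflective isometry, which descends to a reflection of $H$ across the zipper diagonal. For axis $v_1 v_4$ this would demand $\theta_2 = \theta_6$ and $\theta_3 = \theta_5$, each a forbidden two-term $\Q$-dependence. Thus each of the three zipperings produces a genuine convex tetrahedron $T_1, T_2, T_3$. As a bookkeeping matter, the lower bound $\theta_i > \pi/3$ guarantees that the short diagonals $|v_i v_{i+2}| = 2\sin(\theta_{i+1}/2)$ strictly exceed $1$, so the three unit identified edges and the three strictly longer fold edges of each $T_k$ are unambiguously distinguishable, and the unit edges form a genuine Hamiltonian path whose cut unfolds $T_k$ back to $H$.

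For the incongruence step I would compare the multisets of total face angles at the four vertices of each $T_k$:
\begin{align*}
M_1 &= \{\theta_1,\ \theta_4,\ \theta_2+\theta_6,\ \theta_3+\theta_5\},\\
M_2 &= \{\theta_2,\ \theta_5,\ \theta_1+\theta_3,\ \theta_4+\theta_6\},\\
M_3 &= \{\theta_3,\ \theta_6,\ \theta_2+\theta_4,\ \theta_1+\theta_5\}.
\end{align*}
Congruent tetrahedra share such multisets, so it suffices to show $M_i \neq M_j$ for $i \neq j$. Suppose a bijection $M_1 \to M_2$ exists; consider where $\theta_1$ maps. The images $\theta_2$ and $\theta_5$ are forbidden two-term $\Q$-dependencies; the image $\theta_1 + \theta_3$ forces $\theta_3 = 0$; and the image $\theta_4 + \theta_6$ forces $\theta_1 = \theta_4+\theta_6$, after which $\theta_4 \in M_1$ has no admissible image in $M_2\setminus\{\theta_4+\theta_6\}$ except $\theta_1+\theta_3$, giving $\theta_4 = \theta_1 + \theta_3$, which combined with $\theta_1 = \theta_4 + \theta_6$ yields $\theta_3+\theta_6 = 0$, impossible for positive angles. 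The remaining comparisons $M_1$ vs.\ $M_3$ and $M_2$ vs.\ $M_3$ follow by an identical argument under cyclic relabeling.

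The main obstacle I anticipate is the non-degeneracy step, specifically the claim that a flat Alexandrov realization of one of the zipper gluings forces a reflective symmetry of $H$ through the corresponding diagonal. This deserves careful justification---perhaps via Alexandrov rigidity applied to the putative flat realization together with a direct comparison of the two arcs of $H$ separated by the zipper axis---whereas the pairwise-independence case analysis in the incongruence step is a routine finite check once set up.
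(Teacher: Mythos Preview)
Your overall plan matches the paper's: zip each of the three perimeter-halvings, invoke Alexandrov, and separate the resulting tetrahedra by comparing vertex angle-sums. Your incongruence case analysis is essentially the paper's distinctness lemma, and your attention to the doubly-covered degenerate case is a point the paper actually glosses over.

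The genuine gap is in what you call a ``bookkeeping matter.'' You assert that the three unit zipper segments, together with three ``fold edges'' (short diagonals $v_iv_{i+2}$), are the six edges of $T_k$, and that the length inequality $|v_iv_{i+2}|=2\sin(\theta_{i+1}/2)>1$ settles this. It does not. Alexandrov's theorem produces a convex realization but says nothing about \emph{which} geodesics on the glued manifold become its edges; a geodesic triangulation of the intrinsic metric need not coincide with the $1$-skeleton of the realization. Even the identification of your ``fold edges'' is ambiguous: for the axis $v_1v_4$, the cone point $\{v_3=v_5\}$ is joined to $v_1$ by at least two distinct hexagon geodesics ($v_1v_3$ and $v_1v_5$), and nothing you have said picks one out as the edge, or excludes that the actual edge is some other geodesic entirely.

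This is precisely where the paper spends its effort. It first proves, using the hypothesis $\theta_i>\pi/3$ via an argument that the unit geodesic disk about each cone point on the glued surface contains no other cone point, that each unit zipper segment is a \emph{shortest} path between its endpoints. It then proves, by a short $\R^3$ argument special to tetrahedra (every pair of vertices is already joined by a straight segment), that any shortest path between vertices of a tetrahedron must be an edge. Without something equivalent, you have shown only that $H$ is a common zipper-unfolding of three tetrahedra, not a common \emph{edge}-unzipping---and the ``edge'' qualifier is the whole point of the theorem.
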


The angle restrictions in the statement of the theorem are,
in some sense, incidental, included to match the proof techniques.
The result is quite narrow, and although certainly generalizations
hold, there are impediments to proving them formally.  This issue will be
discussed in Section~\secref{Discussion}.

\section{Example}
\seclab{Example}
An example is shown in Figure~\figref{Tetrahedra3}.
The equilateral hexagon is folded via a \emph{perimeter halving}
folding~\cite[Sec.~25.1.2]{do-gfalop-07}:
half the perimeter determined by opposite vertices
is glued (``zipped'') to the other half, matching the other four
vertices
in two pairs.  Because the hexagon is equilateral, the corresponding
edge lengths match.
\begin{figure}[htbp]
\centering
\includegraphics[width=\linewidth]{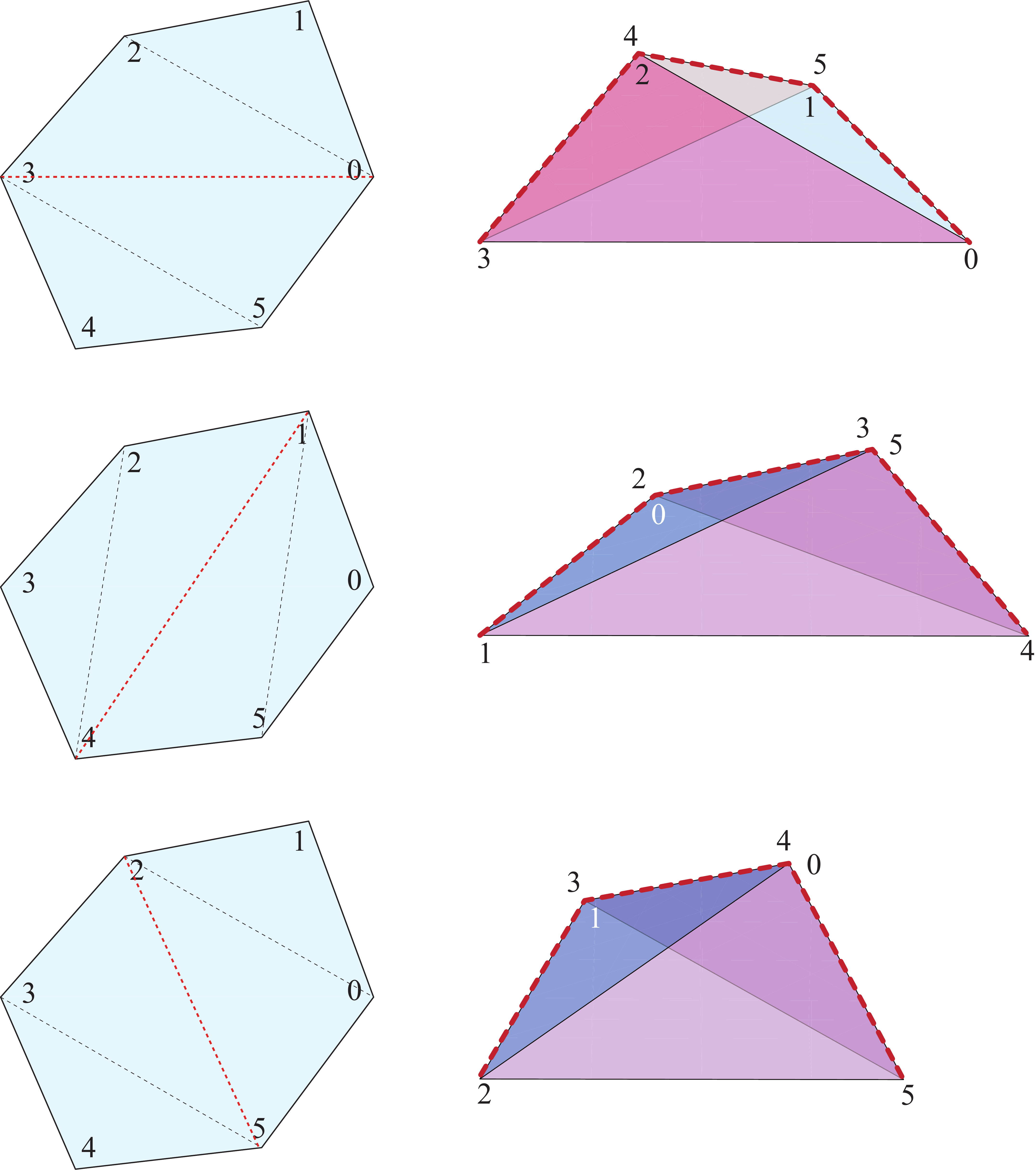}
\caption{Three tetrahedra that edge-unzip
(via the dashed Hamiltonian path) to
a common equilateral hexagon.}
\figlab{Tetrahedra3}
\end{figure}

The resulting shape is a convex polyhedron by Alexandrov's theorem
(\cite[Sec.~23.3]{do-gfalop-07}),
and a tetrahedron because it has four points at which the curvature is
nonzero.
Choosing to halve the perimeter at each of the three pairs of opposite
vertices leads to the three tetrahedra illustrated.
Note that each has three unit-length edges that form a Hamiltonian
zipper path, as claimed in the theorem.

The main challenge is to show that the edges of the hexagon glued
together in
pairs become edges of the polyhedron, and so the unzipping is an edge-unzipping.

\section{Distinct Tetrahedra}
\seclab{Distinct}
First we show that the linear independence condition in
Theorem~\thmref{HexTheorem}
ensures the tetrahedra will be distinct.
We define two angles $x$ and $y$ to be \emph{linearly independent over $\Q$}
if there are no rationals $a$ and $b$ such that $y = a \pi + b x$.
We will abbreviate this condition to ``linear independence'' below.

We will show that the curvatures 
($2\pi$ minus the incident face angle)
at the vertices 
of the tetrahedra
differ
at one vertex or more, which ensures that the tetrahedra are not
congruent
to one another.
Let the vertices of the hexagon be $v_0, \ldots, v_5$, with
vertex angles $\a_0, \ldots, \a_5$.

We identify the three tetrahedra by their halving diagonals,
and name them
$T_{03}$, $T_{14}$, $T_{25}$.
We name the curvatures at the four vertices of the tetrahedra $\o_1$, $\o_2$, $\o_3$, $\o_4$.
These curvatures for the three tetrahedra 
(see Figure~\figref{Tetrahedra3})
are as
follows:

\begin{center}
\begin{tabular}{|c||c|c|c|c|}
\hline \hline
Tetrahedron & \multicolumn{4}{|c|}{Vertex Curvatures}
\\ \cline{2-5}
halving diagonal & $\o_1$ & $\o_2$ & $\o_3$ & $\o_4$
\\ \hline \hline
$T_{03} \;:\; v_0 v_3$ 
   & $\frac{1}{2} (2 \pi - \a_0)$ 
   & $\frac{1}{2} (2 \pi -\a_3)$ 
   & $2 \pi - (\a_1 + \a_5)$
   &  $2 \pi - (\a_2 + \a_4)$
\\ \hline

$T_{14} \;:\; v_1 v_4$ 
   & $\frac{1}{2} (2 \pi - \a_1)$ 
   & $\frac{1}{2} (2 \pi -\a_4)$ 
   & $2 \pi - (\a_0 + \a_2)$
   &  $2 \pi - (\a_3 + \a_5)$
\\ \hline

$T_{25} \;:\;  v_2 v_5$ 
   & $\frac{1}{2} (2 \pi - \a_2)$ 
   & $\frac{1}{2} (2 \pi -\a_5)$ 
   & $2 \pi - (\a_0 + \a_1)$
   &  $2 \pi - (\a_3 + \a_4)$

\\ \hline \hline
\end{tabular}
\end{center}

Now we explore under what conditions could the four curvatures of
$T_{03}$ be identical to the four curvatures of $T_{14}$.
(There is no need to explore other possibilies, as they are all
equivalent to this situation by relabeling the hexagons.)
Let us label the curvatures of $T_{03}$ without primes, and those of $T_{14}$ with primes.
First note that we cannot have $\o_1 = \o'_1$ or $\o_1 = \o'_2$,
for then two angles must be equal: $\a_0 = a_1$ or $\a_0 = \a_4$
respectively.
And equal angles are not linearly independent.

So we are left with these possibilities: $\o_1 = \o'_3$, or $\o_1 =
\o'_4$.
The first leads to the relationship $\a_2 = \pi - \frac{1}{2} \a_0$,
a violation of linear independence.
The second possibility, $\o_1 = \o'_4$, requires further analysis.

It is easy to eliminate all but these two possibilities, which map
indices $(1,2,3,4)$ to either $(4',3',1',2')$ or $(4',3',2',1')$:
$$ 
\o_1 = \o'_4 \;,\; 
\o_2 = \o'_3  \;,\;
\o_3 = \o'_1  \;,\;
\o_4 = \o'_2
$$
and
$$ 
\o_1 = \o'_4 \;,\; 
\o_2 = \o'_3  \;,\;
\o_3 = \o'_2  \;,\;
\o_4 = \o'_1
$$
Explicit calculation shows that the first set implies that 
$\a_4 = 2 \pi - 2 \a_2$,
and the second implies that
$\a_1 = \frac{3}{4} \pi - \a_0$.

Thus we have reached the desired conclusion:
\begin{lemma}
Linear independence over $\Q$ of pairs of angles
of the hexagon (as stated in Theorem~\thmref{HexTheorem})
implies that the three tetrahedra have distinct vertex
curvatures, and therefore are incongruent polyhedra.
\lemlab{Distinct}
\end{lemma}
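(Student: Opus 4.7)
The strategy is by contradiction. Assume two of the three tetrahedra, say $T_{03}$ and $T_{14}$, are congruent; by the cyclic symmetry in the hexagon labeling, the two other pairings reduce to this case. Congruent convex polyhedra must carry identical multisets of vertex curvatures, so there must exist a bijection $\s$ on $\{1,2,3,4\}$ with $\o_i = \o'_{\s(i)}$ for every $i$. The plan is to eliminate each of the $4! = 24$ choices of $\s$ by showing it forces either two hexagon angles to be equal, or a nontrivial $\Q$-linear relation of the form $\a_j = a\pi + b\a_k$ between two hexagon angles --- both forbidden by the linear independence hypothesis.

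To control the combinatorics, I would first partition the curvatures of each tetrahedron into two classes: the \emph{halved} curvatures $\o_1, \o_2$ of the form $\pi - \tfrac{1}{2}\a_j$, and the \emph{paired} curvatures $\o_3, \o_4$ of the form $2\pi - (\a_j + \a_k)$. Three kinds of match can arise. A halved-to-halved match $\pi - \tfrac{1}{2}\a_j = \pi - \tfrac{1}{2}\a_{j'}$ is immediate: $\a_j = \a_{j'}$, excluded. A halved-to-paired match already gives a rational relation among three angles, and combined with the remaining matches in the bijection together with the hexagon identity $\sum_i \a_i = 4\pi$ it reduces, by successive substitution, to a two-angle relation. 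A paired-to-paired match yields a four-angle relation that similarly collapses.

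This classification prunes the $24$ permutations to a very short list. In particular, $\o_1$ cannot equal $\o'_1$ or $\o'_2$ without forcing equal angles, so $\o_1$ must map to $\o'_3$ or $\o'_4$. The first choice instantly produces $\a_2 = \pi - \tfrac{1}{2}\a_0$, and we are done. The second choice leaves the matching of $\o_2, \o_3, \o_4$ still to be settled; a second round of the same pruning leaves only the two permutations $(1,2,3,4) \mapsto (4',3',1',2')$ and $(1,2,3,4) \mapsto (4',3',2',1')$ viable. For each, I would solve the four-equation system, using $\sum_i \a_i = 4\pi$ to eliminate one angle and one of the halved equations to eliminate another, extracting the desired two-angle $\Q$-linear relation.

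The main obstacle is verifying that the simultaneous systems for the two surviving permutations really collapse all the way down to a relation between just a pair of angles, rather than stopping at an under-constrained relation among three or more. This is a short linear algebra calculation over $\Q$ once the hexagon identity is included; the expected conclusions are the relations $\a_4 = 2\pi - 2\a_2$ and $\a_1 = \tfrac{3}{4}\pi - \a_0$, each of which violates linear independence. Having disposed of $T_{03}$ versus $T_{14}$ and invoking the cyclic symmetry argument for the other two pairs completes the proof.
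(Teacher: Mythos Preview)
Your proposal is correct and follows essentially the same route as the paper's own argument: compare $T_{03}$ against $T_{14}$ (other pairs by relabeling), rule out halved-to-halved matches by angle equality, dispose of $\o_1=\o'_3$ via $\a_2=\pi-\tfrac12\a_0$, and reduce $\o_1=\o'_4$ to the two surviving permutations $(4',3',1',2')$ and $(4',3',2',1')$, which yield $\a_4=2\pi-2\a_2$ and $\a_1=\tfrac34\pi-\a_0$ respectively. The only cosmetic difference is your explicit ``halved/paired'' taxonomy and your mention of the hexagon identity $\sum\a_i=4\pi$ as an elimination tool; the paper simply says ``explicit calculation'' at that step.
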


We have phrased the condition as linear independence of pairs of
angles
for simplicity, but in fact a collection of linear relationships must
hold
for the four curvatures to be equal.
So the restriction could be phrased more narrowly.
Note also we have not used in the proof of Lemma~\lemref{Distinct}
the restriction that the angles all be ``fat'', $\a_i > \pi/3$.
This will be used only in Lemma~\lemref{Edges} below.

It would be equally possible to rely on edge lengths rather than
angles to force distinctness of the tetrahedra.  For example, we could demand that
no two diagonals of the hexagon have the same length
(but that would leave further work).
Another alternative is to avoid angle restrictions, permitting all equilateral hexagons, but
only conclude that ``generally'' the three tetrahedra are distinct.
Obviously a regular hexagon leads to three identical tetrahedra.

The form of Theorem~\thmref{HexTheorem} as stated has the advantage of
easily implying that an uncountable number of hexagons satisfy its conditions
(see Corollary~\corref{Continuum}).

\section{Shortest Paths are Edges}
\seclab{ShortestPaths}
Now we know that the hexagons fold to three distinct tetrahedra.
So there is a zipper path on each tetrahedron that unfolds it
to that common hexagon.
All the remaining work is to show that the zipper path is an
edge path---composed of polyhedron edges.  
This seems to be less straightforward than one might
expect,
largely because there are not many tools available beyond Alexandrov's
existence theorem.

It is easy to see that every edge of a polyhedron $P$ is a shortest
path
between its endpoints.
The reverse is far from true in general, but it holds for tetrahedra:

\begin{lemma}
Each shortest path between vertices of a tetrahedron is realized by
an edge
of the tetrahedron.
\lemlab{ShortestPaths}
\end{lemma}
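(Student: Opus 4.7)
The plan is to compare intrinsic surface length against the straight-line distance in~$\R^3$, exploiting the fact that the surface of the tetrahedron consists of flat faces isometrically embedded in~$\R^3$.

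First I would observe that any rectifiable curve $\gamma$ lying on the surface has intrinsic length equal to its ordinary arc length in~$\R^3$: on each face the surface metric is induced from~$\R^3$, and arc length is additive across edge crossings. Combined with the elementary inequality that any curve in~$\R^3$ from $u$ to $v$ has length at least $\|u-v\|$, this gives
\[
\mathrm{length}(\gamma)\;\ge\;\|u-v\|.
\]
Because in a tetrahedron every two vertices are adjacent, $\|u-v\|$ is exactly the length $|uv|$ of the edge joining them. The edge $uv$ is itself a surface path achieving this bound, so it is a shortest path and the intrinsic distance from $u$ to $v$ equals $|uv|$.

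To conclude that \emph{every} shortest path from $u$ to $v$ is the edge $uv$, I would examine the equality case: $\mathrm{length}(\gamma)=\|u-v\|$ forces $\gamma$ to be a straight segment in~$\R^3$. A straight segment that lies on the surface must lie entirely within a single face, because two adjacent faces of a non-degenerate tetrahedron are not coplanar and so meet only along their common edge. A straight segment inside one triangular face whose endpoints are two of its vertices is a side of the triangle, that is, an edge of the tetrahedron. Hence the only shortest path from $u$ to $v$ is the edge $uv$ itself.

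The one delicate point is this equality analysis---ruling out a straight $\R^3$ segment on the surface that crosses transversely from one face into another---but non-degeneracy of the tetrahedron makes it immediate, so I do not anticipate a real obstacle.
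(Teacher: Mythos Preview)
Your proof is correct and follows essentially the same line as the paper's: both hinge on the fact that every pair of tetrahedron vertices is joined by an edge, which as a straight segment in $\R^3$ is no longer than any surface curve between the same endpoints, with strict inequality unless the curve coincides with that segment. The paper packages this as a two-line contradiction and leaves the non-degeneracy point implicit, whereas you spell out the equality case explicitly; the underlying argument is the same.
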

\begin{proof}
Note that there are $\binom{4}{2} = 6$ shortest paths between the
four
vertices, and six edges in a tetrahedron, so the combinatorics are
correct.

Suppose a path $\rho=xy$ is a shortest path between vertices $x$ and
$y$ of a tetrahedron $T$, but not an edge of $T$.  Because each pair
of vertices
of a tetrahedron is connected by an edge, there is an edge $e=xy$ of
$T$.
Because $\rho$ is not an edge of $T$, it cannot be realized as a
straight segment
in $\R^3$, because all of those vertex-vertex segments are edges of $T$.
However, $e$ is a straight segment in $\R^3$,
and so $|e| < |\rho|$, contradicting the assumption that $\rho$
is a shortest path.
\end{proof}

The reason this proof works for tetrahedra but can fail for a
polyhedron of $n > 4$ vertices is that the condition
that every pair of vertices are connected by an edge fails in general.
The polyhedra for which that holds are the ``neighborly polyhedra''
(more precisely, the 2-neighborly 3-polytopes).

\section{Hexagon Edges are Tetrahedron Edges}
\seclab{Edges}
With Lemma~\lemref{ShortestPaths} in hand, it only remains to show
that
the pairs of matched unit-length hexagon edges are shortest paths on
the manifold $M$ obtained by zipping the hexagon.
We use the labeling and folding shown in Figure~\figref{HexLabels}(a),
where $v_1$ and $v'_1$ are identified, as are $v_2$ and $v'_2$.
We need to show that both $v_0 v_1$ and $v_1 v_2$ are shortest paths
($v_2 v_3$ is symmetric to $v_0 v_1$ and follows by relabeling).

\begin{figure}[htbp]
\centering
\includegraphics[width=\linewidth]{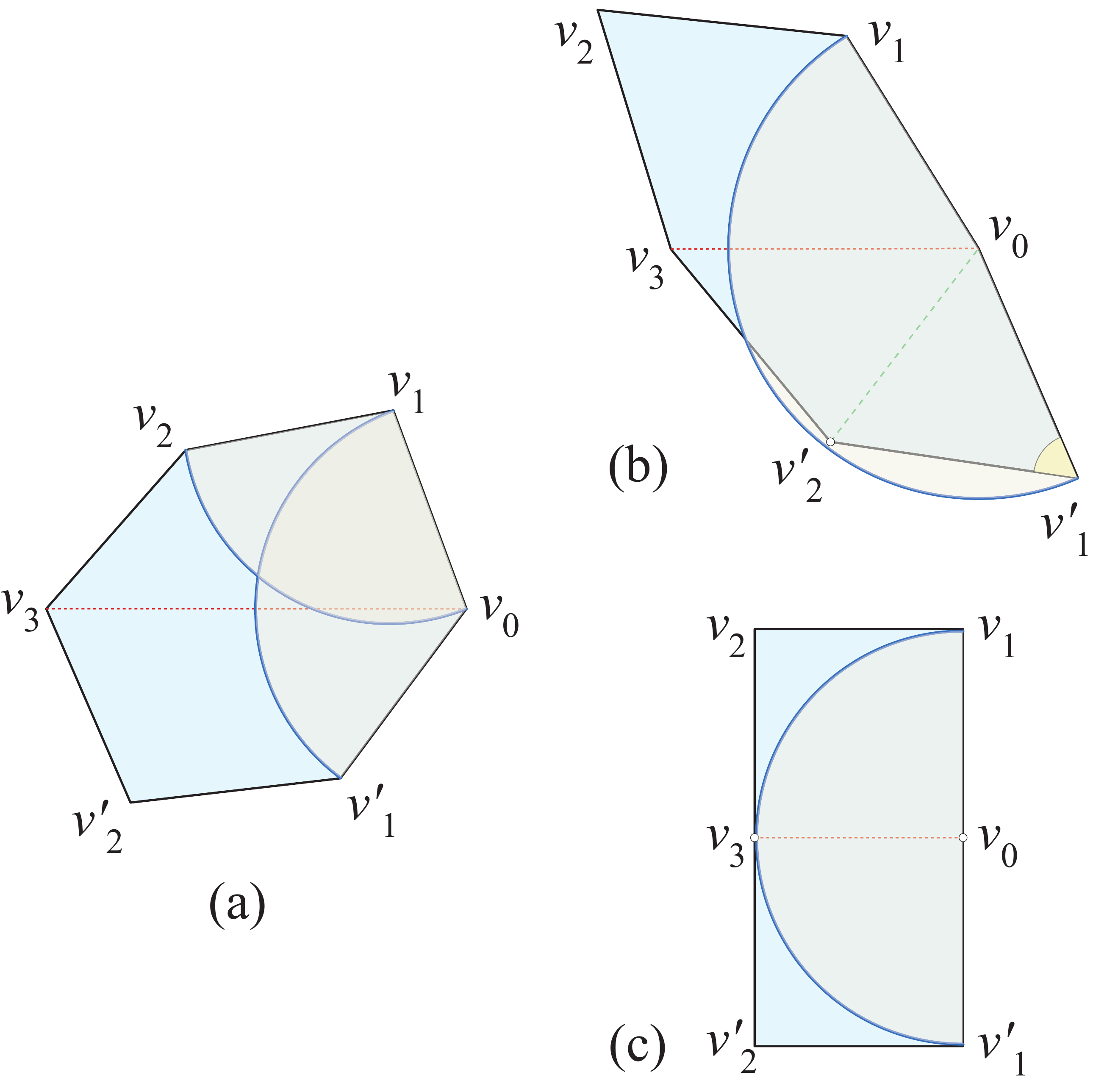}
\caption{(a)~Labels for the zipping of a hexagon.  The dashed edge shows
  the
perimeter halving line, but there is no assumption that diagonal is
a crease that becomes an edge of the tetrahedron.
(b)~When $D_0$ is not empty, some angle (here, $\a'_1$) is smaller than $\pi/3$.
(c)~A ``degenerate'' hexagon, with two angles $\pi$,
which folds to a doubly covered square.
}
\figlab{HexLabels}
\end{figure}

Let $D_i$ be the geodesic disk of unit
radius centered on $v_i$,  on the zipped manifold $M$.  If the disks $D_0$ and $D_1$, illustrated
in  Figure~\figref{HexLabels}(a), are empty of other vertices,
then the desired shortest paths are established.
It is here that we will employ the assumption that the hexagon angles are
greater than $\pi/3$.

First, for any unit-equilateral convex hexagon, without constraints on the
angles,
diagonals connecting opposite vertices are at least length~1:
$|v_0 v_3| \ge 1$,
$|v_1 v'_2| \ge 1$, and
$|v_2 v'_1| \ge 1$.
We now argue for this elementary fact 
(which is likely known in some guise
in the literature).

Concentrating on $D_0$ and $v_0 v_3$ (all others are equivalent by
relabeling),
consider the quadrilateral $(v_0,v_1,v_2,v_3)$.
Because the hexagon is unit-equilateral, 
$|v_0 v_1| = 1$,
$|v_1 v_2| = 1$, and
$|v_2 v_3| = 1$.
Assume for a contradiction that the diagonal is short: $|v_0 v_3| < 1$.
Then it is not difficult to prove that the sum of the quadrilateral
angles 
at the endpoints of this shorter side exceed $\pi$:
$\angle v_3 v_0 v_1 + \angle v_2 v_3 v_0 > \pi$.
Applying the same logic to the other half of the hexagon sharing
diagonal $v_0 v_3$,
the quadrilateral $(v_3, v'_2, v'_1, v_0)$,
we reach the conclusion that the sum of the hexagon angles
at $v_0$ and $v_3$ exceeds $2\pi$.
Thus one of those angles exceeds $\pi$, contradicting the fact
that the hexagon is convex.\footnote{
  I thank Mirela Damian for this argument, which is simpler than my
  original proof.}


We may conclude from this analysis is that $v_3$ cannot lie inside $D_0$.

Suppose now that $v'_2$ is inside $D_0$,
as in  Figure~\figref{HexLabels}(b).
Then, the angle $\a'_1$ at $v'_1$ must
be smaller than $\pi/3$, contradicting the angle minimum assumed in
the theorem.
The same clearly holds for $v_2$ inside $D_0$, as well as the other
combinations.
So the assumption that the hexagon is ``fat'' in the sense that none
of its angles are small, guarantees that the disks $D_0$ and $D_1$
(and by symmetry, $D_2$ and $D_3$) are empty of vertices of the
hexagon.

As is evident from Figure~\figref{HexLabels}(b), however,
even when no vertex is inside $D_0$, a portion of $D_0$ can fall
outside
the hexagon, which, because it is all zipped to a closed manifold
$M$, means it re-enters the hexagon at the other copy of that edge.
However, three facts are easily established.
First, the ``overhang'' has width at most $h=1 - \sqrt{3}/2$;
see Figure~\figref{DiskOverhang}.
Second, no vertex of the ``next'' hexagon copy can lie inside
that overhang (without violating convexity of the hexagon).
Third, to even have a hexagon edge be partially interior 
(and so overhanging the disk into a third hexagon copy)
requires some angle
($\b$ in the figure) to be very small,
violating the $\pi/3$ minimum angle.
Thus the overhang of a $D_i$ disk beyond the original
hexagon cannot encompass a vertex.
\begin{figure}[htbp]
\centering
\includegraphics[width=0.75\linewidth]{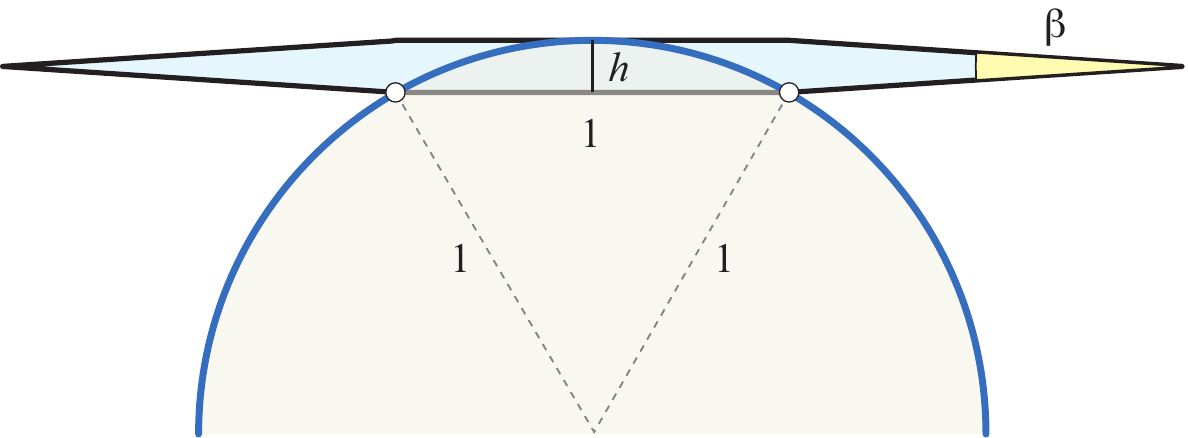}
\caption{A disk $D_i$ ``overhangs'' the original hexagon and enters
another copy.  Here $\b=2 \sin^{-1}( h/2 ) < 8 ^\circ$.}
\figlab{DiskOverhang}
\end{figure}

We may conclude:

\begin{lemma}
The three unit-length edges of the hexagon, 
$\{ v_0 v_1, v_1 v_2, v_2 v_3 \}$
are each shortest paths on the folded manifold $M$ between their endpoint vertices.
\lemlab{Edges}
\end{lemma}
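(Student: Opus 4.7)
The plan is to show that each of the three unit-length hexagon edges $v_0 v_1$, $v_1 v_2$, $v_2 v_3$ is a shortest path on the zipped manifold $M$ between its endpoints. Combined with Lemma~\lemref{ShortestPaths}, this will force each to be an edge of the resulting tetrahedron. By the symmetry of the labeling it suffices to treat $v_0 v_1$ and $v_1 v_2$, with $v_2 v_3$ following by relabeling. Since each candidate edge is a straight segment of length $1$ in the hexagon, it will be a shortest path on $M$ as soon as the open geodesic disk $D_i$ of radius $1$ centered at each endpoint $v_i$ is empty of vertices of $M$ other than its partner on the boundary.

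To verify emptiness of $D_0$ (the other disks being symmetric), I would break the analysis into three sub-claims. First, a planar lower bound: for any unit-equilateral convex hexagon, the ``long'' diagonal satisfies $|v_0 v_3|\ge 1$. The argument uses the convex quadrilaterals $(v_0,v_1,v_2,v_3)$ and $(v_3,v'_2,v'_1,v_0)$ sharing this diagonal: were it strictly shorter than the three unit sides of each quadrilateral, the two quadrilateral angles at its endpoints would sum to more than $\pi$ in each half, and adding across both halves would give $\a_0+\a_3>2\pi$, contradicting convexity. This rules out $v_3\in D_0$. Second, the fat-angle hypothesis rules out the ``shoulder'' vertices $v_2$ and $v'_2$: in the triangle $v_0 v_1 v_2$ with two unit sides, $|v_0 v_2|<1$ forces the angle at $v_1$ below $\pi/3$, contradicting $\a_1>\pi/3$, and the identification of $v'_1$ with $v_1$ handles $v'_2$.

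Third, because $D_0$ is a geodesic disk on the closed manifold $M$, it may spill past the hexagon boundary and re-enter through a zipped edge, so I must check that no vertex is captured in the overhang. I would bound the overhang depth by $h=1-\sqrt{3}/2$ via elementary trigonometry on a unit equilateral triangle, then use convexity of the hexagon to forbid a vertex of the immediately adjacent copy from lying in such a thin strip along any boundary edge. A cascading re-entry into a further hexagon copy would require some angle to be at most $2\sin^{-1}(h/2)<8^{\circ}$, again violating the $\pi/3$ minimum.

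The main obstacle I anticipate is this third sub-claim: the first two are elementary planar facts, but tracking the geodesic disk as it wraps across one or several zipped edges requires careful accounting of the global combinatorics of the zipping. One must argue not merely that the overhang is shallow but also that no cascade of re-entries into further hexagon copies can accumulate to enclose a vertex. It is here that the combined force of convexity and the uniform fat-angle hypothesis is genuinely needed, rather than either assumption on its own.
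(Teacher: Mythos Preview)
Your proposal is correct and follows essentially the same route as the paper's own argument: the same symmetry reduction to $v_0 v_1$ and $v_1 v_2$, the same geodesic-disk criterion, the same three sub-claims (the quadrilateral argument for $|v_0 v_3|\ge 1$, the isosceles-triangle argument forcing $\a_1<\pi/3$ if $v_2$ or $v'_2$ lies in $D_0$, and the overhang analysis with the bound $h=1-\sqrt{3}/2$ and the cascade angle $2\sin^{-1}(h/2)<8^{\circ}$). Your closing paragraph correctly identifies the overhang step as the place where the argument is least mechanical, which is exactly where the paper also leaves matters at the level of ``easily established'' assertions supported by a figure rather than a fully formal case analysis.
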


I have firm empirical evidence 
(via explorations in Cinderella)
that Lemma~\lemref{Edges} holds
just as stated without any assumption that the hexagon is fat.
But proving this formally seems difficult.
Figure~\figref{Reflected01} illustrates a path $\rho$ that spirals
around the manifold from $v_0$ to $v_1$.
Each such geodesic path candidate must be established to be at least length 1.

\begin{figure}[htbp]
\centering
\includegraphics[width=\linewidth]{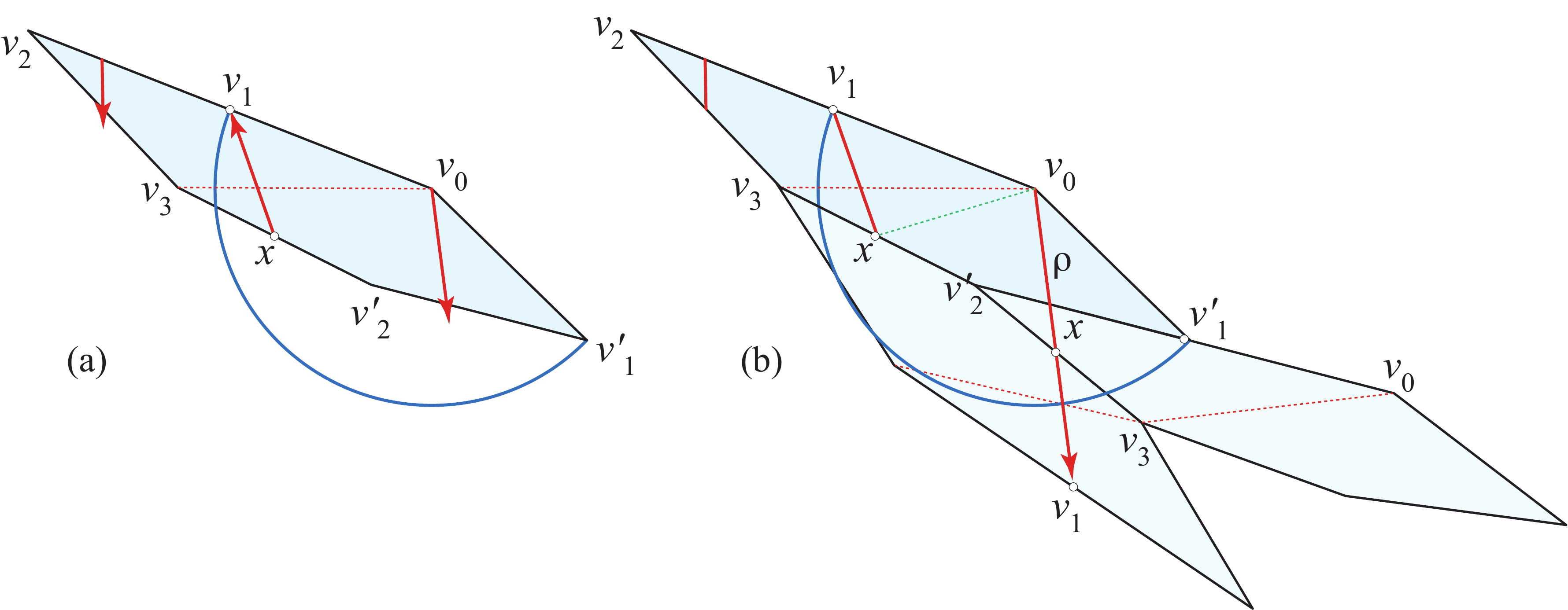}
\caption{(a)~A ``spiral'' path from $v_0$ to $v_1$.
(b)~The path flattened, crossing several copies of the hexagon.}
\figlab{Reflected01}
\end{figure}

\section{Discussion}
\seclab{Discussion}
We have now proved Theorem~\thmref{HexTheorem}:
Lemma~\lemref{Edges} shows that the three unit-length edges forming
a Hamiltonian path are shortest paths,
Lemma~\lemref{ShortestPaths} then implies that they are edges of the
tetrahedron,
and Lemma~\lemref{Distinct} establishes that the three tetrahedra are
distinct.

\begin{corollary}
There are an uncountable number of hexagons that satisfy the conditions of
Theorem~\thmref{HexTheorem}.
\corlab{Continuum}
\end{corollary}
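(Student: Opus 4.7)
The strategy is to exhibit a positive-dimensional smooth family of equilateral convex hexagons with angles in $(\pi/3,\pi)$, and then argue that only a measure-zero set of members of this family can fail the linear-independence condition, leaving uncountably many valid hexagons.

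First, I would produce a smooth, at-least-one-parameter family of equilateral convex unit hexagons with all interior angles in $(\pi/3,\pi)$. A convenient parametrization records the six edge directions $\Theta_0,\ldots,\Theta_5$ of a unit-equilateral closed hexagon, subject to the closure constraints $\sum_i \cos\Theta_i = \sum_i \sin\Theta_i = 0$. The regular hexagon is a nondegenerate solution of these equations, so the implicit function theorem supplies a smooth local moduli space of positive dimension near it. Restricting to a small neighborhood of the regular hexagon keeps all interior angles close to $2\pi/3$, hence inside the convex ``fat'' open region required by Theorem~\thmref{HexTheorem}.

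Next, for any two indices $i\neq j$ and any pair $(a,b)\in\Q^2$, the relation $\a_j = a\pi + b\a_i$ cuts out a real-analytic subset of this moduli space. The key point is that this subset is \emph{proper} (not the whole family), which I would verify by a genericity argument: along a smooth deformation from the regular hexagon, the derivatives $\dot\a_i$ and $\dot\a_j$ can be arranged to be independent, so $\dot\a_j - b\dot\a_i \neq 0$ for all but one value of $b$, and by choosing the direction of deformation generically one can ensure that no one of the countably many relations holds identically.

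Because $\Q^2$ and the set of index pairs are both countable, the ``bad'' set---hexagons for which some pair of angles is linearly dependent over $\Q$---is a countable union of proper real-analytic subvarieties of a positive-dimensional manifold, so it has Lebesgue measure zero in the parameter space. Its complement, intersected with the open region where convexity and the $(\pi/3,\pi)$ angle bounds hold, is therefore uncountable, yielding uncountably many hexagons to which Theorem~\thmref{HexTheorem} applies. The main obstacle is the genericity step: one must rule out the possibility that a given linear relation holds identically on the family (in which case removing its zero set would remove the whole family). This is handled by the derivative computation above, possibly after increasing the dimension of the family so that the differentials of the $\a_i$ span a space of dimension at least~$2$ at the base point, which is clear for the regular hexagon.
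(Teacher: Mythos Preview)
Your approach is correct and shares the paper's overall architecture: exhibit a positive-dimensional family of admissible equilateral hexagons, then delete the countable union of measure-zero loci cut out by the rational relations $\alpha_j = a\pi + b\,\alpha_i$. The paper, however, parametrizes directly by the angle vector $(\alpha_0,\ldots,\alpha_5)$ subject only to $\sum_i \alpha_i = 4\pi$ and $\pi/3<\alpha_i<\pi$, declaring this a $5$-dimensional open set in $\R^5$; each rational relation is then literally an affine hyperplane, so properness and measure zero are immediate, with no genericity argument needed. Your route through edge directions and the implicit function theorem is more scrupulous, since it enforces the closure constraint $\sum_k e^{i\Theta_k}=0$ that the paper silently elides (indeed not every admissible angle vector comes from a closed equilateral hexagon, so the true moduli space is $3$-dimensional rather than $5$; the paper's conclusion survives, but its dimension count does not). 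The cost of your extra care is precisely the genericity step you flag: you must verify that no single relation $\alpha_j = a\pi + b\,\alpha_i$ holds identically on the family. Rather than the derivative computation you sketch, this can be discharged more cheaply by exhibiting, for each index pair $(i,j)$, three explicit equilateral convex hexagons whose points $(\alpha_i,\alpha_j)$ are non-collinear in the plane, so that no affine line contains them all.
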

\begin{proof}
The constraints that $\sum \a_i = 4 \pi$
and $\pi/3 < \a_i < \pi$
clearly leave an uncountable number of solutions, in fact a 5-dimensional open
set in $\R^5$
(the sixth angle is determined by the other five).
(For example, a small 5-ball around
$$
(\a_0,\a_1,\a_2,\a_3,\a_4) =
(\pi-\tfrac{1}{2},
\pi-\tfrac{1}{2},
\pi-\tfrac{1}{2},
\tfrac{\pi}{3}+\tfrac{1}{4},
\tfrac{\pi}{3}+\tfrac{1}{4})
$$
with $\a_5=\frac{\pi}{3}+1$, is inside this set.)
The constraint requiring independence over $\Q$ only excludes
a countable number of 4-dimensional hyperplanes
(e.g., $\a_2 = \pi - \frac{1}{2} \a_0$).
These hyperplanes have zero measure in $\R^5$,
and a countable union of sets of zero measure has zero measure,%
\footnote{
   I thank Qiaochu Yuan and Theo Buehler for guidance here,
   \url{http://math.stackexchange.com/questions/41494/}.
}
leaving an uncountable number of solutions after excluding
the hyperplanes.  
\end{proof}

The construction considered here generalizes to arbitary even $n$, although
I do not see how to prove that the zipper path follows
edges of the polyhedron:

\begin{proposition}
For any even $n \ge 4$, an equilateral convex $n$-gon is
the common zipper-unfolding of $n/2$ generally distinct polyhedra
of $(n-2)$ vertices each.
\end{proposition}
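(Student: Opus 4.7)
The plan is to generalize the hexagon construction. Label the vertices of the equilateral convex $n$-gon $v_0, \ldots, v_{n-1}$ with interior angles $\a_0, \ldots, \a_{n-1}$. For each $i \in \{0, 1, \ldots, n/2 - 1\}$, perform perimeter halving across the diameter $v_i v_{i+n/2}$: zip the two halves of the boundary together, matching points by arc length measured from $v_i$. Because the polygon is equilateral, unit edges pair with unit edges, and the vertex identifications $v_{i+k} \sim v_{i-k}$ for $k = 1, \ldots, n/2 - 1$ respect edge lengths. This produces a closed topological sphere with an intrinsic metric whose cone points are the two halving endpoints $v_i, v_{i+n/2}$ together with the $n/2 - 1$ identified pairs $\{v_{i+k}, v_{i-k}\}$.

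By Alexandrov's existence theorem, this metric is realized as a convex polyhedron $P_i$, possibly degenerating in borderline cases (e.g.\ $n=4$) to a doubly covered polygon. The $n/2$ identified unit-edge pairs become $n/2$ length-one segments on $P_i$, and by construction they trace out the path
$v_i \to \{v_{i\pm 1}\} \to \{v_{i\pm 2}\} \to \cdots \to \{v_{i\pm(n/2-1)}\} \to v_{i+n/2}$,
which visits every cone point of $P_i$ exactly once --- a Hamiltonian zipper path. Running $i$ over the $n/2$ pairs of opposite vertices produces the $n/2$ claimed polyhedra, all having the $n$-gon as a common zipper-unfolding.

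For the ``generally distinct'' claim I would extend the analysis of Lemma~\lemref{Distinct}. Two polyhedra $P_i$ and $P_j$ with $i \neq j$ can be congruent only if some bijection between their cone-point sets pairs vertices with matching curvatures, and each such bijection forces a finite system of linear relations with rational coefficients among the $\a_k$ and $\pi$. The number of bijections to check is finite, each one carves out a proper affine subspace of the parameter space of equilateral convex $n$-gons, and the finite union of these exceptional subspaces has measure zero. An $n$-gon whose angles avoid this exceptional set yields $n/2$ pairwise non-congruent polyhedra, justifying ``generally''.

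The main obstacle I would \emph{not} try to surmount here is precisely the one flagged in the sentence preceding the proposition: upgrading ``zipper-unfolding'' to ``edge-unfolding zipper pair''. Lemma~\lemref{ShortestPaths} relied decisively on every pair of tetrahedron vertices being joined by an edge, a property that fails as soon as $P_i$ has more than four vertices. Showing that each of the $n/2$ matched unit segments is an actual edge of $P_i$ --- rather than a shortest geodesic crossing face interiors --- would require substantially new ideas, which is why the proposition is stated only as a zipper-unfolding result.
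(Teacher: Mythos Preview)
The paper does not actually supply a proof of this proposition; it is stated immediately after the author remarks that ``the construction considered here generalizes to arbitrary even $n$, although I do not see how to prove that the zipper path follows edges of the polyhedron.'' Your write-up is exactly the argument the paper has in mind: perimeter-halve at each of the $n/2$ pairs of opposite vertices, invoke Alexandrov's theorem for existence, observe that the glued unit edges form a Hamiltonian path through the cone points, and handle ``generally distinct'' by the same linear-independence reasoning as in Lemma~\lemref{Distinct}. You also correctly identify why the edge-unfolding upgrade is out of reach for $n>6$.

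One point worth flagging: your construction produces $2 + (n/2 - 1) = n/2 + 1$ cone points, not the $(n-2)$ vertices asserted in the proposition. The two counts coincide only when $n=6$; for $n=8$ one gets $5$-vertex polyhedra, for $n=10$ one gets $6$-vertex polyhedra, and so on. Your count is the correct one (and Gauss--Bonnet confirms it: the curvatures $2\pi-\a_i$, $2\pi-\a_{i+n/2}$, and $2\pi-(\a_{i+k}+\a_{i-k})$ for $k=1,\dots,n/2-1$ sum to $(n/2+1)\cdot 2\pi - (n-2)\pi = 4\pi$). The ``$(n-2)$'' in the paper appears to be a slip --- note that for a triangulated polyhedron on $n/2+1$ vertices, $n-2$ is the number of \emph{faces}. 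So your proof is sound but establishes a slightly different (and correct) vertex count than the one printed.
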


The regular-polygon version of this construction folds to
what were called ``pita polyhedra''
in~\cite[Sec.~25.7.2]{do-gfalop-07}.

I conjecture that all the zipper-paths in Proposition~1,
for strictly convex equilateral convex $n$-gons,
in fact follow polyhedron edges.
Resolving this conjecture would require tools to determine when
particular
geodesic paths on an Alexandrov-glued manifold are edges of
the resulting convex polyhedron.


\bibliographystyle{alpha}
\bibliography{/Users/orourke/bib/geom/geom}
\end{document}